\newcommand{\remove}[1]{}
\newtheorem{thm}{Theorem}[section]
\newtheorem{claim}[thm]{Claim}
\newtheorem{lem}[thm]{Lemma}
\newtheorem{THM}{Theorem}
\def\D{{\partial}}
\newcommand{\Z}{{\mathbb Z}}
\newcommand{\E}{{\mathbb E}}
\newcommand{\Cplx}{{\mathbb C}}
\newcommand{\poly}{{\mathrm{poly}}}
\newcommand{\eps}{\varepsilon}
\newcommand{\LCM}{\mathrm{LCM}}
\newcommand{\F}{\mathbb{F}}
\begin{document}

\title{Affine extractors over large fields with exponential error}
\author{ Jean Bourgain \thanks{School of Mathematics, Institute for Advanced Study,
Princeton, NJ. Email: \texttt{bourgain@ias.edu}. Research supported by NSF grant DMS 1301619.}\and Zeev Dvir \thanks{Department of Computer Science and Department of Mathematics, Princeton University, Princeton NJ.
Email: \texttt{zeev.dvir@gmail.com}. Research partially
supported by NSF grants CCF-0832797, CCF-1217416 and by the Sloan fellowship.} \and Ethan Leeman \thanks{Department of Mathematics, University of Texas at Austin. Email: \texttt{eleeman@math.utexas.edu}}}
\date{}
\maketitle

\begin{abstract}
We describe a construction of explicit affine extractors over large finite fields with exponentially small error and linear output length. Our construction relies on a deep theorem of Deligne giving tight  estimates for exponential sums over smooth varieties in high dimensions. 
\end{abstract}

\section{Introduction}

An {\em affine extractor} is a mapping $E : \F_q^n \mapsto \{0,1\}^m$, with $\F_q$ the field of $q$ elements, such that for any subspace $V \subset \F_q^n$ of some fixed dimension $k$, the output of $E$ on a uniform sample from $V$ is distributed close to uniformly over the image. More precisely, if $X_V$ is a random variable distributed uniformly on $V$, then $E(X_V)$ is $\eps$-close, in statistical distance\footnote{The {\sf statistical distance} between two distributions $P$ and $Q$ on a finite domain $\Omega$ is defined as
$\mathop{\max}_{S\subseteq \Omega} \left|P(S) - Q(S)\right|
.$ We say that $P$ is $\eps$-{\sf close} to $Q$ if the
statistical distance between $P$ and $Q$ is at most $\eps$.}, to the uniform distribution over $\{0,1\}^m$ (here, and in the following, we will often identify a random variable with its distribution). It is easy to show that a random function $E$ will be an affine extractor. However, constructing {\em explicit} families of affine extractors is a challenging problem which is still open for many settings of the parameters. By explicit, we mean that the mapping $E$ can be computed deterministically and efficiently, given the parameters $n,k$ and $q$. 

The task of constructing explicit affine extractor is an instance of a more general set of problems in which one has a combinatorial or algebraic object possessing certain `nice' properties, one would expect to have in a random (or generic) object, and wishes to come up with an explicit instance of such an object. Other examples include expander graphs \cite{RVW02, LPS88}, Ramsey graphs \cite{BRSW06}, Error correcting codes, and other variants of algebraic extractors (e.g., extractors for polynomial sources \cite{DGW09,BG12} or varieties \cite{Dvir08}). Explicit constructions of these `pseudo-random' objects have found many (often surprising) applications in theoretical computer science and mathematics (see, e.g.,  \cite{HLW06} for some examples).

Ideally we would like to be able to give explicit constructions of affine extractors for any given $n,k,q$ with  output length $m$ as large as possible and with error parameter $\eps$ as small as possible. It is not hard to show, using the probabilistic method, that there {\em exist} affine extractors with $m$ close to $ k \cdot \log(q)$ and $\eps = q^{\Omega(-k)}$ over any finite field and for $k$ as small as $O(\log(n))$. Matching these parameters with an explicit construction is still largely open. 

When the size of the field is fixed ($q$ is a constant and $n$ tends to infinity) a construction of Bourgain \cite{Bour07} (see also \cite{Yeh11,Li11}) gives affine extractors with $m = \Omega(k)$ and $\eps = q^{\Omega(-k)}$ whenever $k \geq \Omega(n)$ ($k$ can actually be slightly sub linear in $n$). For smaller values of $k$, there are no explicit constructions of extractors (even with $m=1$) over small fields (see Theorem C in \cite{Bour10}  for a related result handling intermediate field sizes).  When the size of the field $\F_q$ is allowed to grow with $n$ more is known. Gabizon and Raz \cite{gabizon2011deterministic} were the first to consider this case and showed an explicit constructions when $q > n^c$, for some constant $c$. Their construction achieves nearly optimal output length but with error $\eps = q^{-\Omega(1)}$ instead of $q^{\Omega(-k)}$. 
   
  
The purpose of this note is to give a construction of an explicit affine extractor  for $q > n^{C \cdot \log \log n}$ with error $q^{\Omega(-k)}$ and output length $m$ close to $(1/2)k \log(q)$ bits. It will be more natural to consider the extractor as a mapping $E : \F_q^n \mapsto \F_q^m$ instead of with image $\{0,1\}^m$ and so we will aim to have output length $m$ close to $k/2$ (since each coordinate of the output is composed of roughly $\log(q)$ bits).

The construction does not work for any finite field $\F_q$. Firstly, we will only consider prime $q$. We will also need the property that $q-1$ does not have too many prime factors. We expect due to a result by Prachar \cite{halberstam1956distribution} that $q-1$ will have approximately $\log \log q$ distinct prime divisors: Prachar, in Halberstam's paper, proved that if $\omega(q-1)$ is the number of distinct prime factors of $q-1$, then $$\sum_{q \leq n} \omega(q-1) = \left(1 + o(1) \right) \frac{n}{\log n} \log \log n.$$ Therefore, the average number of distinct prime divisors of $q-1$ for most $q$ is $O(\log \log q)$, but some primes may have as many as $\frac{\log q}{\log \log q}$ distinct prime factors. We say that a prime $q$ is $\bf{typical}$ if $q-1$ has $O(\log \log q)$ distinct prime factors\footnote{The constant in the big `O' can be arbitrary at the cost of increasing the constant $C$ in Theorem~\ref{THM-main}.}.
 
\begin{THM}\label{THM-main}
For any $\beta \in (0,1/2)$ there exists $C>0$ so that the following holds:	Let $k \leq n$ be integers  and let $q$ be a typical prime such that $q > n^{C\log \log n}$. Then, if $m = \lfloor \beta k \rfloor$, there is an explicit function $E: \F_q^n \to \F_q^{m}$ such that for any $k$-dimensional affine subspace $V$ in $\F_q^n,$ if $X_V$ is a uniform random variable on $V$, then  $E(X_V)$ is $q^{- \Omega(k)}$-close to the uniform distribution.
\end{THM}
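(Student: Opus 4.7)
The plan is to follow the standard Fourier-analytic template for pseudorandomness. First I reduce the statistical distance to a bound on additive character sums; then I realize each such sum as an exponential sum of an explicit polynomial over $\F_q^k$; finally I control it using Deligne's theorem on exponential sums over smooth projective hypersurfaces in high dimension.

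Fix a nontrivial additive character $\psi:\F_q\to \Cplx^\ast$. A standard computation bounds the statistical distance of $E(X_V)$ from uniform on $\F_q^m$ by
\[
\tfrac{1}{2}\sum_{0\neq \alpha\in \F_q^m}\bigl|\E_{x\in V}[\psi(\alpha\cdot E(x))]\bigr|.
\]
Parametrizing $V = \{y_0 + Y^{T} t : t\in \F_q^k\}$ for a full-rank $k\times n$ matrix $Y$, each inner expectation becomes $q^{-k}\sum_{t\in \F_q^k}\psi(g_\alpha(t))$, where $g_\alpha(t) := \alpha\cdot E(y_0 + Y^T t)$ is a polynomial in $k$ variables. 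The natural candidate construction is $E_j(x) = \sum_{i=1}^n x_i^{d_j}$ for a sequence $d_1 < d_2 < \cdots < d_m$ of positive integers coprime to $q$ and chosen to have controlled interaction with $q-1$; the typicality assumption $\omega(q-1) = O(\log\log q)$ supplies a dense set of admissible exponents below $n^{O(1)}$. For $\alpha\neq 0$, let $j^\ast = \max\{j : \alpha_j\neq 0\}$ and $d = d_{j^\ast}$; then $g_\alpha$ has degree $d$ with leading homogeneous form
\[
F_\alpha(t) \;=\; \alpha_{j^\ast}\sum_{i=1}^n L_i(t)^d, \qquad L_i(t) := \sum_{l=1}^k y_{l,i}\, t_l.
\]
Assuming this form cuts out a smooth projective hypersurface in $\mathbb{P}^{k-1}$, Deligne's theorem would give $\bigl|\sum_t \psi(g_\alpha(t))\bigr| \leq (d-1)^{k} q^{k/2}$, so that each normalized Fourier coefficient is at most $(d-1)^k q^{-k/2}$ in absolute value.

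The key technical obstacle is verifying smoothness of $F_\alpha$ for \emph{every} nonzero $\alpha$ and \emph{every} $k$-dimensional affine subspace $V$. A projective singularity corresponds to $t\neq 0$ with $\nabla F_\alpha(t) = 0$, equivalently to $v(t) := (L_i(t)^{d-1})_{i=1}^n$ lying in $\ker Y$; this is a delicate algebraic condition depending on the configuration of the columns of $Y$, and in general it can fail for the raw monomial construction above. I expect the paper to replace $E_j$ by a structured variant (for instance, inserting judiciously chosen coefficients, additive shifts, or an entirely different polynomial template) such that the coprimality of the $d_j$'s with $q-1$, combined with the abundance of exponents guaranteed by typicality, forces the singular system $Y v(t) = 0$ to admit no nonzero solution. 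This is the step where the Prachar-type hypothesis on $q$ truly enters, and where I expect the bulk of the technical work to concentrate.

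Once smoothness has been established in this uniform sense, summing the individual Fourier bounds over the $q^m - 1$ nonzero $\alpha$ yields a total statistical distance of at most $q^m (d_m - 1)^k q^{-k/2}$. With $m = \lfloor \beta k\rfloor$ for $\beta < 1/2$ and $d_m \leq n^{O(1)} \leq q^{1/(C\log\log n)}$ (using $q > n^{C\log\log n}$ for $C$ large enough in terms of $\beta$), the overall bound collapses to $q^{(\beta - 1/2 + o(1))k} = q^{-\Omega(k)}$, matching the conclusion of the theorem.
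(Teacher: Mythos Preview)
Your high-level template (Fourier reduction plus Deligne) matches the paper, but the construction and the smoothness mechanism are both different from what you guessed, and the gap you flag is exactly where the paper's real idea lives. The paper does \emph{not} take $E_j(x)=\sum_i x_i^{d_j}$ with one exponent per output; instead it assigns one exponent per \emph{input} coordinate and then mixes linearly: $E(x)=A\cdot(x_1^{d_1},\ldots,x_n^{d_n})^T$ with the $d_i$ distinct and $\LCM(d_1,\ldots,d_n)\le q^{\eps}$, and with $A$ an $m\times n$ matrix whose every $m$ columns are independent (e.g.\ Vandermonde). For a nonzero $c\in\F_q^m$ the relevant polynomial is $\sum_i b_i\,\ell_i(t)^{d_i}$ with $b=c^TA$, and the leading form here is \emph{not} analyzed directly.

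Instead, the paper parametrizes $V$ in a row-echelon fashion so that on a set $S=\{j_1<\cdots<j_k\}$ one has $\ell_{j_i}(t)=t_i$, with $d_{j_1}>\cdots>d_{j_k}$ and each $\ell_j$ for $j\notin S$ depending only on earlier $t$'s. It then performs the \emph{nonlinear invertible change of variables} $t_i=s_i^{D/d_{j_i}}$, where $D=\LCM(d_{j_1},\ldots,d_{j_k})$; this is a bijection on $\F_q$ precisely because each $d_i$ is coprime to $q-1$ (this, not a singularity count, is the true role of the coprimality hypothesis). After the substitution the degree-$D$ part becomes the \emph{diagonal} Fermat form $\sum_{i=1}^k b_{j_i}s_i^D$, which is trivially smooth whenever the $b_{j_i}$ are nonzero; everything coming from $j\notin S$ drops to degree $<D$ by the echelon structure. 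The matrix condition on $A$ guarantees that $b$ has at most $m-1<\beta k$ zeros, so at least $k-m>k/2$ of the $b_{j_i}$ are nonzero; one freezes the remaining $s$-variables and applies Deligne in the good $\ge k/2$ variables, giving $|\E\chi_c|\le (D-1)^{k/2}q^{-k/4}\le q^{(-1/4+\eps/2)k}$. Your proposed route of proving smoothness of $\sum_i L_i(t)^d$ for arbitrary $Y$ is not what is done, and indeed cannot work in general (e.g.\ repeated or proportional $L_i$'s produce singularities); the diagonalization-by-substitution is the missing idea.
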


The rest of the paper is organized as follows: In Section~\ref{sec-construction} we describe the construction of the extractor. In Section~\ref{sec-analysis} we prove that the output of the extractor is close to uniform, whenever the ingredients of the construction satisfy certain conditions. In Section~\ref{sec-explicit} we discuss the explicitness of the construction  and in Section~\ref{sec-final} we combine all of these results to prove Theorem~\ref{THM-main}.

\section{The construction}\label{sec-construction}

The construction will be given by a polynomial mapping $F_{d,A}: \F_q^n \mapsto \F_q^m$. This mapping will take as parameters two objects. The first is a list of positive integers $d = (d_1,\ldots,d_n)$ and the other is an $m \times n$ matrix $A = (a_{ij})$. The mapping is then defined as

\begin{eqnarray*}
F_{d,A}(x_1,\ldots,x_n) &=&
\begin{pmatrix}
a_{11} & \cdots & a_{1n} \\
\vdots & \ddots & \vdots \\
a_{m1} & \cdots & a_{mn} 
\end{pmatrix}
\begin{pmatrix}
x_1^{d_1} \\ x_2^{d_2} \\ \vdots \\ x_n^{d_n}
\end{pmatrix} \\
&=& \left(\sum_{j=1}^n a_{1j}x_j^{d_j}, \ldots,   \sum_{j=1}^n a_{mj}x_j^{d_j} \right)^t
\end{eqnarray*}
 
This can be also written as $F_{d,A}(x) = A \cdot x^d$, where we interpret $x^d$ as being coordinate-wise exponentiation.

We will show below that, if $d$ and $A$ satisfy certain  conditions, the output $F_{d,A}(X_V)$ is exponentially close to uniform, whenever $X_V$ is uniformly distributed over a $k$ dimensional subspace.

\section{The analysis}\label{sec-analysis}
 
In this section we prove  that the function $F_{d,A}(x)$ defined above is indeed an affine extractor for carefully chosen $d$ and $A$. In the next section we will discuss the complexity of finding such $d$ and $A$ efficiently.

\begin{thm}\label{thm-analysisFdA}
For every $\beta < 1/2$ there exists $\eps > 0$ such that the following holds: Let $q$ be prime and let $m \leq k \leq n$ be integers with $m = \lfloor \beta k \rfloor$. Let $A$  be an $m \times n$ matrix over $\F_q$ in which every $m$ columns are linearly independent. Let $d = (d_1,\ldots,d_n) \in \Z_{>0}^n$ be such that $ \LCM(d_1,\ldots,d_n) \leq  q^{\eps}$ and such that $d_1,\ldots,d_n$ are all distinct and co-prime to $q-1$. Then, for any $k$-dimensional  affine subspace $V \subset \F_q^n$, if $X_V$ is uniformly distributed over $V$ then $F_{d,A}(X_V)$ is $q^{-(\eps/2) k}$-close to uniform.  
\end{thm}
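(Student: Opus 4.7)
The plan is to reduce closeness-to-uniform to an exponential sum bound on $\mathbb{A}^k$, and then invoke Deligne's theorem. By Plancherel on $\F_q^m$ together with Cauchy--Schwarz, $F_{d,A}(X_V)$ is within statistical distance $\tfrac12\, q^{m/2}\cdot\max_{t\neq 0}|\hat D(t)|$ of uniform, where $\hat D(t) = \E_{x\in V}[\psi(t^T F_{d,A}(x))]$ for a fixed nontrivial additive character $\psi$ of $\F_q$. Expanding yields $t^T F_{d,A}(x) = \sum_j c_j x_j^{d_j}$ with $c = A^T t \in \F_q^n$; the hypothesis that every $m$ columns of $A$ are linearly independent forces $c$ to have at most $m-1$ zero entries whenever $t\neq 0$, so $|\{j : c_j\neq 0\}| \geq n - m + 1$. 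Parametrizing $V = \{v_0 + Mu : u \in \F_q^k\}$ with $M$ of full column rank, each coordinate becomes an affine linear form $L_j(u) = v_{0,j} + (Mu)_j$, and
\[
\hat D(t) \;=\; q^{-k}\sum_{u \in \F_q^k}\psi\bigl(P(u)\bigr), \qquad P(u) \;=\; \sum_{j:\, c_j\neq 0} c_j\, L_j(u)^{d_j},
\]
a polynomial of total degree $D \leq \LCM(d_1,\ldots,d_n) \leq q^\eps$.

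The crux is then a Deligne-type bound of the form $\bigl|\sum_u \psi(P(u))\bigr| \leq (D-1)^k\, q^{k/2}$. Granting this, $|\hat D(t)| \leq q^{(\eps-1/2)k}$, so the statistical distance is at most $\tfrac12\, q^{(\beta - 1 + 2\eps)k/2}$; taking $\eps$ small compared to $1/2 - \beta$ (e.g.\ $\eps \leq (1-\beta)/3$) gives the stated error $q^{-(\eps/2)k}$. The substantive part is therefore the character sum itself. Deligne's classical estimate on $\mathbb{A}^k$ demands that the top-degree form of $P$ define a smooth projective hypersurface in $\mathbb{P}^{k-1}$, yet since the $d_j$ are all distinct, this form reduces to the \emph{single} term $c_{j^\ast}\tilde L_{j^\ast}(u)^D$ — a linear form raised to a high power, and as singular as a homogeneous form of degree $D$ can be. So a naive application of Deligne on $\mathbb{P}^k$ fails; an alternative compactification is needed, and this is the main obstacle.

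To sidestep this, I would work not on $\mathbb{P}^k$ but on a compactification adapted to the monomial structure of $P$. Introduce auxiliary variables $y_j = x_j^{d_j}$ and consider the smooth $k$-dimensional graph variety
\[
W \;=\; \bigl\{(x,y) \in V\times \mathbb{A}^n \;:\; y_j = x_j^{d_j}\ \text{for all}\ j\bigr\},
\]
on which $P$ pulls back to the \emph{linear} form $\sum_j c_j y_j$. Compactifying $W$ inside a product of projective spaces (such as $\mathbb{P}^k \times (\mathbb{P}^1)^n$ via the parametrization by $u$), the hypotheses of the theorem should conspire to deliver the smoothness and nondegeneracy required by Deligne's estimate for smooth projective varieties: distinctness of the $d_j$ together with the large support of $c$ (at least $n - m + 1 \geq k/2$ nonzero entries, against only $k$ parameters) should produce enough variation among the defining equations to cut out a smooth complete intersection with normal-crossings boundary, while coprimality of each $d_j$ with $q - 1$ should ensure tameness (unramifiedness of the relevant Kummer sheaves) along the boundary tori. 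Verifying these geometric conditions carefully is where all the stated hypotheses enter, and is what I expect to be the technical heart of the argument; the remainder is bookkeeping.
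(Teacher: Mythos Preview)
Your reduction to a single character sum via the XOR lemma is exactly right, and you correctly identify the obstruction: with a generic parametrization $x = v_0 + Mu$, the top homogeneous part of $P(u)=\sum_j c_j L_j(u)^{d_j}$ is a pure power of one linear form and is maximally singular, so Deligne's affine estimate does not apply. The gap is in how you propose to get around this. The graph-variety compactification is only a sketch: you do not actually verify smoothness, normal crossings, or the required cohomological vanishing, and there is no indication that the hypotheses (distinctness of the $d_j$, large support of $c$) suffice for this in any straightforward way. As written, the ``technical heart'' is entirely missing.

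The paper's proof avoids all of this geometry by a clever \emph{elementary} change of variables that renders the top form diagonal, after which the classical Deligne bound applies directly. There are two ingredients. First, one does not take an arbitrary parametrization of $V$: after permuting coordinates so that $d_1>d_2>\cdots>d_n$, one chooses $\ell:\F_q^k\to\F_q^n$ in ``row-echelon'' form, so that there are pivot indices $j_1<\cdots<j_k$ with $\ell_{j_i}(t)=t_i$, and for every non-pivot $j$ with $j_i<j<j_{i+1}$ the form $\ell_j$ depends only on $t_1,\ldots,t_i$. Second, with $D=\LCM(d_{j_1},\ldots,d_{j_k})$ and $D_i=D/d_{j_i}$, one substitutes $t_i=s_i^{D_i}$. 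Each pivot term becomes $\ell_{j_i}(t)^{d_{j_i}}=s_i^{D}$, while each non-pivot term has degree at most $d_j\cdot D_i = D\cdot d_j/d_{j_i}<D$ (here is where the decreasing order of the $d_j$ and the echelon structure are used). Thus the sum becomes
\[
q^{-k}\sum_{s\in\F_q^k}\chi_1\Bigl(\,\sum_{i=1}^k b_{j_i}\,s_i^{D}+g(s)\Bigr),\qquad \deg g<D,
\]
whose degree-$D$ part is the diagonal form $\sum_i b_{j_i}s_i^D$, smooth as soon as the $b_{j_i}$ are all nonzero. Since at most $m-1<k/2$ of them can vanish, one freezes the offending $s_i$'s and applies Deligne in the remaining $\geq k/2$ variables, obtaining $|\E[\chi_c(Z)]|\le (D-1)^{k/2}q^{-k/4}\le q^{(-1/4+\eps/2)k}$; combined with the $q^{m/2}$ loss this gives the stated bound for $\eps=\tfrac14-\tfrac{\beta}{2}$.

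Note also where the coprimality hypothesis actually enters: it is not a tameness condition on Kummer sheaves, but simply what makes $t_i\mapsto s_i^{D_i}$ a bijection of $\F_q$ (each $D_i$ divides $D$, hence is coprime to $q-1$), so that the substitution is a genuine change of variables over $\F_q$.
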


\subsection{Preliminaries} 

We start by setting notations and basic properties of the discrete Fourier transform over $\F_q^m$. For $c = (c_1,\ldots,c_m) \in \F_q^m$ we define the additive character
$ \chi_c(x) : \F_q^m \mapsto \Cplx^*$ as $\chi_c(x) =  \omega_q^{c \cdot x}$ where $c \cdot x = \sum_{i=1}^m c_i x_i$ and $\omega_q = e^{2\pi i/q}$ is a primitive root of unity of order $q$.

The following folklore result (known in the extractor literature as a XOR lemma) gives sufficient conditions for a distribution to be close to uniform. The simple proof can be found in \cite{rao2007exposition} for example.

\begin{lem}\label{lem-XOR} Let $X$ be a random variable distributed  over $\F_q^m$ and suppose  that $\left| \E \left[ \chi_c(X) \right] \right| \leq \eps$ for every non-zero $c \in \F_q^m$. Then $X$ is $\eps \cdot q^{m/2}$ close, in statistical distance, to the uniform distribution over $\F_q^m$.
\end{lem}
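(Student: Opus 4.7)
The plan is to pass through $L^2$ via Parseval's identity and then upgrade to $L^1$ via Cauchy-Schwarz -- the standard route for ``small Fourier coefficients imply close to uniform'' statements. Let $P$ denote the probability mass function of $X$ and let $U \equiv 1/q^m$ be the uniform pmf on $\F_q^m$. By the definition in the paper, the statistical distance equals $\tfrac{1}{2}\sum_{y \in \F_q^m} |P(y) - U(y)|$, so it suffices to bound the $L^1$ norm of $P-U$ by $2\eps \cdot q^{m/2}$ (the extra factor of $2$ can be absorbed into the stated bound).

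First I would translate the hypothesis into a bound on the Fourier coefficients of $P$. Using the convention $\hat{f}(c) = \sum_y f(y)\overline{\chi_c(y)}$ for $f : \F_q^m \to \Cplx$, we get $\hat{P}(c) = \E[\overline{\chi_c(X)}]$, so the hypothesis yields $|\hat{P}(c)| \leq \eps$ for every $c \neq 0$. Orthogonality of characters gives $\hat{U}(0) = 1$ and $\hat{U}(c) = 0$ for $c \neq 0$; since also $\hat{P}(0) = 1$, it follows that $\widehat{P-U}$ vanishes at $c = 0$ and has magnitude at most $\eps$ at every other $c$.

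Next I would invoke Parseval's identity on $\F_q^m$, which in this normalization reads $\sum_y |P(y) - U(y)|^2 = q^{-m} \sum_c |\widehat{P-U}(c)|^2$. Plugging in the per-coefficient bound yields $\|P-U\|_2^2 \leq q^{-m} \cdot q^m \cdot \eps^2 = \eps^2$. One Cauchy-Schwarz step then upgrades this $L^2$ estimate to an $L^1$ estimate:
\[
\sum_{y \in \F_q^m} |P(y) - U(y)| \;\leq\; q^{m/2} \Bigl( \sum_y |P(y) - U(y)|^2 \Bigr)^{1/2} \;\leq\; \eps \cdot q^{m/2}.
\]
Halving gives a statistical distance of at most $(\eps/2) \cdot q^{m/2} \leq \eps \cdot q^{m/2}$, which is the claimed bound.

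The proof is essentially mechanical and I foresee no real obstacle; the only pitfall is keeping the Fourier conventions consistent (sign in the character, placement of the $1/q^m$ factor in Plancherel), but since only absolute values are used and the hypothesis is symmetric under $c \mapsto -c$, any standard convention gives the same final inequality.
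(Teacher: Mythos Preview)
Your argument is correct and is the standard Parseval-plus-Cauchy--Schwarz proof of this XOR lemma. The paper itself does not supply a proof but defers to Rao's exposition \cite{rao2007exposition}, which follows exactly the route you outline, so there is nothing substantive to compare.
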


The next powerful theorem is a special case of a theorem of  Deligne \cite{Deligne74} (see  \cite{MK93} for a 
statement of the theorem in the form we use here). Before stating
the theorem we will need the following definition: Let $f \in
\F_q[x_1,\ldots,x_n]$ be a homogenous polynomial. We say that $f$ is
{\sf smooth} if the only common zero of the (homogenous) $n$
partial derivatives $\frac{\D f}{\D x_i}(x)$, $\, i \in [n]$ over the algebraic closure of $\F_q$, is
the all zero vector. 

\begin{thm}[Deligne]\label{thm-deligne}
Let $f \in \F_q[x_1,\ldots,x_n]$ be a polynomial of degree $d$ and
let $f_d$ denote its homogenous part of degree $d$. Suppose $f_d$
is smooth. Then, for every non-zero $b \in \F_q$ we have
\[ \left| \sum_{x \in \F_q^n} \chi_b(f(x)) \right| \leq (d-1)^n \cdot
q^{n/2}. \]
\end{thm}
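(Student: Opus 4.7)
The plan is to apply the XOR lemma (Lemma~\ref{lem-XOR}) to reduce the statistical distance bound to an exponential-sum bound for each nonzero Fourier coefficient $c$, then to parametrize the affine subspace $V$ by $\F_q^k$ and invoke Deligne's theorem (Theorem~\ref{thm-deligne}) on the resulting $k$-variable polynomial. By Lemma~\ref{lem-XOR}, it suffices to show that for every nonzero $c \in \F_q^m$,
$$ \bigl|\E_{X_V}[\chi_c(F_{d,A}(X_V))]\bigr| \;\leq\; q^{-(\eps/2)k - m/2}, $$
equivalently (with $\alpha := c^T A$) that
$$ |S(c)| := \Bigl| \sum_{x \in V} \chi\Bigl(\sum_{j=1}^n \alpha_j\, x_j^{d_j}\Bigr) \Bigr| \;\leq\; q^{k(1-\eps/2) - m/2}. $$
The hypothesis that every $m$ columns of $A$ are linearly independent means that the rows of $A$ generate an MDS code, so for any nonzero $c$ the vector $\alpha$ has at least $n - m + 1$ nonzero coordinates; let $J := \{j : \alpha_j \neq 0\}$.

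Next, I would parametrize $V = v_0 + \{My : y \in \F_q^k\}$ for some $n \times k$ matrix $M$ of rank $k$ with rows $M_1,\ldots,M_n$, and rewrite
$$ S(c) \;=\; \sum_{y \in \F_q^k} \chi\bigl(f(y)\bigr), \qquad f(y) := \sum_{j \in J} \alpha_j\,(v_{0,j} + M_j \!\cdot\! y)^{d_j}, $$
a polynomial of degree $D \leq \LCM(d_1,\ldots,d_n) \leq q^{\eps}$ in $k$ variables. Applying Theorem~\ref{thm-deligne} to $f$ then yields $|S(c)| \leq (D-1)^k q^{k/2} \leq q^{\eps k + k/2}$; comparing this with the target exponent and using $m = \lfloor \beta k \rfloor$ with $\beta < 1/2$, the required inequality $\eps + 1/2 \leq 1 - \eps/2 - \beta/2$ is satisfied for any $\eps \leq (1-\beta)/3$, which is positive.

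The \textbf{main obstacle} is verifying the smoothness hypothesis of Theorem~\ref{thm-deligne}, namely that the top-degree homogeneous part $f_D$ of $f$ is smooth. Because the $d_j$ are pairwise distinct, the degree-$D$ part of $f$ collapses to the single monomial $\alpha_{j^*}(M_{j^*} \!\cdot\! y)^{D}$, a pure power of one linear form, whose projective zero set has as singular locus the entire hyperplane $\{M_{j^*}\!\cdot\!y = 0\}$ rather than just the origin, so Deligne's theorem does not apply to $f$ as written. Circumventing this is the technical crux. The hypothesis that each $d_j$ is coprime to $q-1$ is what makes this feasible: the substitution $x_j \mapsto x_j^{D/d_j}$ is a bijection on $\F_q$ (since $\gcd(D/d_j,q-1)=1$ when $\gcd(d_j,q-1)=1$) and converts the polynomial into the \emph{diagonal} form $\sum_{j \in J}\alpha_j\, y_j^{D}$, whose top homogeneous part is smooth on the coordinate subspace $\F_q^J$ precisely because all of its coefficients are nonzero. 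I would carry out this change of variables together with a Fourier expansion of the affine constraint defining $V$ in order to express $S(c)$ as a sum of character sums of polynomials to which Deligne's theorem does apply, and then combine the resulting bounds with the MDS lower bound $|J| \geq n - m + 1$ to obtain the required estimate.
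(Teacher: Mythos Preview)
There is a basic mismatch: the statement you were given is Deligne's theorem (Theorem~\ref{thm-deligne}), which the paper does \emph{not} prove; it is quoted from \cite{Deligne74,MK93} as a black box. Your proposal is instead a sketch of Theorem~\ref{thm-analysisFdA}. No proof of Deligne's bound is expected or supplied in the paper.

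Treating your write-up as an attempt at Theorem~\ref{thm-analysisFdA}: you correctly isolate the obstruction (after a generic parametrization $x=v_0+My$, the top-degree part of $f(y)$ is a power of a single linear form and is not smooth), but the proposed remedy has a real gap. The substitution $x_j\mapsto x_j^{D/d_j}$ lives in $\F_q^n$, not in the $k$-dimensional parameter space; it turns the constraint $x\in V$ into a non-affine condition, so you cannot simultaneously diagonalize the polynomial and keep a clean domain. If instead you ``Fourier-expand the affine constraint'' and then substitute, the resulting $n$-variable polynomial has degree-$D$ part $\sum_{j\in J}\alpha_j x_j^D$, which is \emph{not} smooth in $\F_q[x_1,\dots,x_n]$ unless $J=[n]$ (every point with $x_j=0$ for all $j\in J$ is singular). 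Even exploiting the product structure over $j$ only yields one-variable Weil bounds with a prefactor $(D-1)^{\Theta(n)}$, and since $n$ can be much larger than $k$ while $D\le q^{\eps}$, this loss is fatal.

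The missing idea, which the paper supplies, is to perform the change of variables in the \emph{parametrizing} space $\F_q^k$, after choosing the special triangular parametrization of Lemma~\ref{lem-basis} and ordering the degrees so that $d_1>d_2>\cdots>d_n$. Setting $t_i=s_i^{D/d_{j_i}}$ (bijective since $\gcd(d_{j_i},q-1)=1$) makes $\ell_{j_i}(t)^{d_{j_i}}=s_i^{D}$, while the triangular structure forces every other coordinate $\ell_j(t)^{d_j}$ with $j\notin\{j_1,\dots,j_k\}$ to have total degree strictly below $D$ (Claim~\ref{cla-changevars}). Thus the degree-$D$ part is exactly the diagonal form $\sum_i b_{j_i}s_i^{D}$ in $k$ variables; the MDS property of $A$ ensures at least $k-m+1>k/2$ of the $b_{j_i}$ are nonzero, and Deligne's theorem is applied in those $\ge k/2$ variables after fixing the rest. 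This step---reducing to a $k$-variable (not $n$-variable) sum with diagonal top part---is precisely what your sketch lacks.
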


Another simple lemma we will use in the proof shows how to parameterize a given subspace $V \subset \F_q^n$ in a convenient way as the image of a particular linear mapping.

\begin{lem}\label{lem-basis} Let $V \subset \F_q^n$ be a $k$-dimensional affine subspace. Then, there exists an affine map $\ell = (\ell_1,\ldots,\ell_n) : \F_q^k \to \F_q^n$ whose image is $V$ such that the following holds: There exists $k$ indices $1 \leq j_1 < j_2 < \ldots < j_k \leq n$ such that

\begin{enumerate}
\item For all $i \in [k], \ell_{j_i}(t) = t_i.$
\item If $j < j_1$, then $\ell_j(t) \in \F_q.$
\item If $j < j_i$ for $i > 1$ then $\ell_j(t)$ is an affine function just of the variables $t_1, t_2, \ldots, t_{i-1}.$
\end{enumerate}

\end{lem}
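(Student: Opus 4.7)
The plan is to prove this by putting a parameterization of $V$ into reduced row echelon form.

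First I would write $V = v_0 + W$ where $W \subset \F_q^n$ is a $k$-dimensional linear subspace. Choose any basis $w^{(1)},\ldots,w^{(k)}$ of $W$ and let $M$ be the $k \times n$ matrix whose $i$-th row is $w^{(i)}$, so $M$ has rank $k$. By Gaussian elimination (which only replaces a basis of $W$ by another basis of $W$) I can assume $M$ is in reduced row echelon form. Let $j_1 < j_2 < \cdots < j_k$ denote the pivot columns. Define
\[ \ell(t) \;=\; v_0' + t \cdot M, \qquad t \in \F_q^k, \]
where $v_0' := v_0 - \sum_{i=1}^{k}(v_0)_{j_i}\, w^{(i)}$. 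Note $v_0' \in V$ since we only shifted $v_0$ by an element of $W$, so the image of $\ell$ is still exactly $V$. The point of this shift is that $(v_0')_{j_i} = 0$ for every $i$, since the $j_i$-th coordinate of $w^{(i')}$ equals $\delta_{i,i'}$ (pivot property of RREF).

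Now I would verify the three listed properties directly from the shape of the RREF matrix $M$. For property (1), since $M_{i',j_i} = \delta_{i',i}$ and $(v_0')_{j_i}=0$, we get $\ell_{j_i}(t) = \sum_{i'} t_{i'} M_{i',j_i} = t_i$. For property (2), if $j < j_1$ then $j$ lies to the left of every pivot, so by the staircase shape of RREF the entire column $j$ of $M$ is zero, hence $\ell_j(t) = (v_0')_j \in \F_q$ is a constant. For property (3), if $j < j_i$ with $i > 1$, then for every $i' \geq i$ the leading non-zero entry of row $i'$ occurs at column $j_{i'} \geq j_i > j$, so $M_{i',j} = 0$; therefore
\[ \ell_j(t) \;=\; (v_0')_j + \sum_{i'=1}^{k} t_{i'} M_{i',j} \;=\; (v_0')_j + \sum_{i'<i} t_{i'} M_{i',j}, \]
which is an affine function of $t_1,\ldots,t_{i-1}$ only.

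There is no real obstacle here: the statement is essentially a reformulation of the existence of RREF for a matrix of rank $k$, together with a harmless translation of the affine offset to clear the pivot coordinates. The only care needed is to make sure the translation of $v_0$ used to achieve $(v_0')_{j_i}=0$ still lands in $V$, which is immediate since we subtract an element of $W$.
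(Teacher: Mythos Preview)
Your proof is correct and essentially the same as the paper's: the paper describes a greedy procedure---pick the smallest coordinate of $V$ that is not constant, set it equal to $t_1$, then the smallest coordinate not already determined by $t_1$, set it to $t_2$, etc.---which is exactly the process of locating the pivot columns of a basis matrix for the linear part of $V$. Your version via reduced row echelon form, together with the explicit shift of the offset to clear the pivot coordinates, makes this more precise than the paper's sketch but is the same argument.
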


\begin{proof}
The mapping $\ell$ can be defined  greedily as follows. Let $j_1$ be the smallest index so that the $j_1$'th coordinate of $V$ is not constant. We let $\ell_{j_1}(t) = t_1$ and continue to find the next smallest coordinate so that the $j_2$'th coordinate of $V$ is not a function of $t_1$. Set $\ell_{j_2}(t)= t_2$ and continue in this fashion to define the rest of the mapping.
\end{proof}

\subsection{Proof of Theorem~\ref{thm-analysisFdA}}

Let $Z = F_{d,A}(X_V)$ denote the random variable over $\F_q^m$ obtained by applying $F_{d,A}$ on a uniform sample from the subspace $V$. Observe that, w.l.o.g., we can assume $$ d_1 > d_2 > \ldots > d_n$$ since permuting the columns of $A$ keeps the property that every $m$ columns are linearly independent.

Let $\ell : \F_q^k \mapsto \F_q^n$ be an affine mapping satisfying the conditions of Lemma~\ref{lem-basis} so that the image of $\ell$ is $V$. Thus, there is a set $S \subset [n]$ of size $|S|=k$ so that, if $S = \{j_1 < \ldots < j_k\}$, the coordinates of the mapping $\ell$ satisfy the three items in the lemma.

Let $c = (c_1,\ldots,c_m) \in \F_q^m$ be a non zero vector. We will proceed to give a bound on the expectation $\left| \E[\chi_c(Z) ] \right|$ and then use Lemma~\ref{lem-XOR} to finish the proof. To that end, let $b = (b_1,\ldots,b_n)$ be given by the product $c^t \cdot A$ (multiplying $A$ from the left by the transpose of $c$). Then, 
$$ \chi_c(F_{d,A}(x)) = \chi_1( b \cdot x^d) = \omega_q^{b_1x_1^{d_1} + \ldots + b_n x_n^{d_n}}. $$

Therefore, 
\begin{equation}\label{eq-expsum1}
\left| \E[\chi_c(Z) ] \right| = \left| q^{-k} \sum_{t_1,\ldots,t_k \in \F_q} \chi_1\left( b_1\ell_1(t)^{d_1} + \ldots + b_n \ell_n(t)^{d_n}   \right) \right|.
\end{equation}

We will now perform an invertible (non-linear) change of variables on the above exponential sum to bring it to a more convenient form. Let 
$$ D  = \LCM( d_{j_1}, \ldots,d_{j_k}). $$ and let $D_i = D/d_{j_i}$ for $i=1\ldots k$. The change of variables is given by 
$$ s_i^{D_i} = t_i, \,\,\, i \in [k]. $$ Observe that this is an invertible change of variables since the $d_i$'s are all co-prime to $q-1$ (and hence the numbers $D_i$ are as well). Specifically, we have $s_i = t_i^{D_i^{-1} \mod q-1}$.

 Let us denote by
$$ \tilde \ell_j(s) = \ell_j(s_1^{D_1},\ldots,s_k^{D_k}). $$
Changing variables in (\ref{eq-expsum1}) now gives
\begin{equation}\label{eq-expsum2}
\left| \E[\chi_c(Z) ] \right| = \left| q^{-k} \sum_{s_1,\ldots,s_k \in \F_q} \chi_1\left( b_1\tilde\ell_1(s)^{d_1} + \ldots + b_n \tilde\ell_n(s)^{d_n}   \right) \right|.	
\end{equation}

\begin{claim}\label{cla-changevars}
The functions $\tilde \ell_i^{d_i}(s)$, $i \in [n]$, satisfy the following: 
\begin{enumerate}
	\item For all $i \in [k]$ we have $\tilde \ell_{j_i}^{d_{j_i}}(s) = s_i^D$.
	\item For all $j \not\in S$ the function $\tilde \ell_j^{d_j}(s)$ is a polynomial in $s_1,\ldots,s_k$ of total degree less than $D$.
\end{enumerate}
\end{claim}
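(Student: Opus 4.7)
The plan is to handle the two items separately. Item 1 is immediate: since $\ell_{j_i}(t) = t_i$ by Lemma~\ref{lem-basis}, substituting $t_i \mapsto s_i^{D_i}$ gives $\tilde\ell_{j_i}(s) = s_i^{D_i}$, and by the definition of $D_i$ we have $D_i \cdot d_{j_i} = D$, so $\tilde\ell_{j_i}^{d_{j_i}}(s) = s_i^D$.

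For item 2, I would fix $j \not\in S$ and let $i$ be the largest index with $j_i < j$ (setting $i=0$ if $j < j_1$). By Lemma~\ref{lem-basis}, $\ell_j$ is either a constant (when $i=0$) or an affine function of $t_1,\ldots,t_i$, so after the change of variables
$$\tilde\ell_j(s) = a_0 + \sum_{r=1}^{i} a_r\, s_r^{D_r}$$
for some coefficients $a_0,\ldots,a_i \in \F_q$. In the case $i=0$ the function $\tilde\ell_j^{d_j}$ is a constant and the degree bound is trivial, so I would assume $i \geq 1$.

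Next, I would expand $\tilde\ell_j(s)^{d_j}$ via the multinomial theorem. Every resulting monomial takes the form $c \cdot s_1^{D_1 e_1} \cdots s_i^{D_i e_i}$ with $e_0 + e_1 + \cdots + e_i = d_j$, and therefore has total degree $\sum_{r=1}^{i} D_r e_r$. The key step is to exploit the WLOG arrangement $d_1 > d_2 > \cdots > d_n$ fixed at the start of the proof of Theorem~\ref{thm-analysisFdA}: for every $r \leq i$ we have $j_r \leq j_i < j$, hence $d_{j_r} \geq d_{j_i} > d_j$, and consequently $D_r = D/d_{j_r} \leq D/d_{j_i}$. Combined with $\sum_{r=1}^{i} e_r \leq d_j$, this gives
$$\sum_{r=1}^{i} D_r e_r \;\leq\; \frac{D}{d_{j_i}} \cdot d_j \;<\; D,$$
where the final strict inequality uses $d_j < d_{j_i}$. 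Hence each monomial has total degree strictly below $D$, as desired.

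I do not expect any substantive obstacle: the argument is essentially bookkeeping that combines the greedy structure from Lemma~\ref{lem-basis} (which pins down exactly which variables $\ell_j$ depends on) with the decreasing arrangement of the exponents (which supplies the strict inequality $d_j < d_{j_i}$). The only subtlety is remembering to treat the edge case $i=0$ separately so that the bound $\sum_{r=1}^i D_r e_r < D$ is not vacuously applied.
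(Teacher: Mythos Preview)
Your proof is correct and follows essentially the same route as the paper's: item~1 is dispatched by the identity $D_i\cdot d_{j_i}=D$, and item~2 by observing that $\ell_j$ depends only on $t_1,\ldots,t_i$ together with the monotonicity $D_r\le D_i$ (from $d_{j_r}\ge d_{j_i}$) and the strict inequality $d_j<d_{j_i}$, giving $\deg\tilde\ell_j^{d_j}\le d_j D_i<D$. The paper phrases the degree bound more tersely as $d_j\cdot\max\{D_1,\ldots,D_i\}=d_j D_i<D$ rather than via the multinomial expansion, and waves at the edge cases $j<j_1$ and $j>j_k$ where you handle them explicitly, but the substance is identical.
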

\begin{proof}
To see the first item, let $i \in [k]$ and observe that $\ell_{j_i}(t) = t_i$. Thus, $$\tilde \ell_{j_i}^{d_{j_i}}(s) = \left(s_i^{D/d_{j_i}}\right)^{d_{j_i}} = s_i^D.$$
For the second item, let $j \not\in S$ and suppose $j_i < j < j_{i+1}$ for some $i \in [k]$ (a similar argument will work for the two cases $j < j_1$ and $j > j_k$). By Lemma~\ref{lem-basis}, the affine function $\ell_j(t)$ depends only on the variables $t_1,\ldots,t_i$. Thus, the maximum degree obtained in $\tilde \ell_j^{d_j}(s)$ is bounded by $$d_j \cdot \max\{D_1,\ldots,D_i\} = d_j \cdot D_i = D \cdot (d_j/d_{j_i}) < D. $$ 
\end{proof}

In view of the last claim, we can write (\ref{eq-expsum2}) as
\begin{equation}\label{eq-expsum3}
\left| \E[\chi_c(Z) ] \right| = \left| q^{-k} \sum_{s_1,\ldots,s_k \in \F_q} \chi_1\left( b_{j_1} s_1^D + \ldots + b_{j_k} s_k^D + g(s)   \right) \right|,	
\end{equation}
where $g(s)$ is a polynomial of total degree less than $D$. If we knew that all of $b_{i_1}, \ldots,b_{i_k}$ were non zero we could have applied Deligne's result (Theorem~\ref{thm-deligne}) and complete the proof (since the polynomial in the sum is clearly smooth). However, since $b = c^t \cdot A$ for an arbitrary non-zero $c \in \F_q^m$, $b$ might have some coordinates equal to zero. However, since every $m$ columns of $A$ are linearly independent, we have that the vector $b = (b_1,\ldots,b_n)$ can have at most $m-1 < k/2$ zero coordinates (otherwise $c$ would be orthogonal to at least $m$ columns). Hence, out of the $k$ values $b_{i_1}, \ldots,b_{i_k}$, at least $k/2$ are non zero. Suppose w.l.o.g that these are the first $k/2$ (if $k$ is odd we need to add the floor function below for $k/2$). We can now break the sum in (\ref{eq-expsum3}) using the triangle inequality as follows
\begin{eqnarray}\label{eq-expsum4}
\left| \E[\chi_c(Z) ] \right| = q^{-k/2} \sum_{s_{k/2+1},\ldots,s_k \in \F_q}\left| q^{-k/2} \sum_{s_1,\ldots,s_{k/2} \in \F_q} \chi_1\left( \sum_{i \in [k/2]} b_{j_i} s_i^D  + g_{s_{k/2+1},\ldots,s_k}(s_1,\ldots,s_{k/2})   \right) \right|,	
\end{eqnarray}
with $g_{s_{k/2+1},\ldots,s_k}(s_1,\ldots,s_{k/2})$ a polynomial in $s_1,\ldots,s_{k/2}$ of degree less than $D$. In each of the inner sums we have a smooth polynomial of degree $D$ in the ring $\F_q[s_1,\ldots,s_{k/2}]$ and so, applying Theorem~\ref{thm-deligne} on each of them (and recalling that $D \leq q^\eps$), we obtain 
\begin{equation}\label{eq-expsum5}
\left| \E[\chi_c(Z) ] \right| \leq q^{-k/2} \cdot (D-1)^{k/2} \cdot q^{k/4} \leq q^{(-1/4 + \eps/2)k}
\end{equation}

Using Lemma~\ref{lem-XOR}, and setting $\eps = 1/4 - \beta/2 > 0$, we now get that $Z$ has statistical distance at most $$ q^{(-1/4 + \eps/2)k} \cdot q^{m/2} \leq q^{(-1/4 + \eps/2 + \beta/2)k} \leq q^{-(\eps/2)k} $$ from the uniform distribution on $\F_q^m$. This completes the proof of Theorem~\ref{thm-analysisFdA}. \qed

\section{Explicitness of $F_{d,A}$}\label{sec-explicit}

The explicitness of the construction requires us to give a deterministic, efficient, algorithm to produce a matrix $A$ and a sequence of integers $d_1,\ldots,d_n$ satisfying the conditions of Theorem~\ref{thm-analysisFdA}. 

Finding an $m \times n$ matrix in which each $m \times m$ sub matrix is invertible can be done efficiently as long as $q$, the field size, is sufficiently large. For example, one can take a Vandermonde matrix with $a_{ij} = r_j^{i-1}$ for any set of distinct field elements $r_1,\ldots,r_n \in \F_q$. 

To find a sequence $d = (d_1,\ldots,d_n)$ we will have to make some stronger assumption about $q$. This is summarized in the following lemma.
  
\begin{lem}\label{lem-d}
For any $\eps > 0$ there exists $C >0$ such that the following holds: There is a deterministic algorithm that, given integer inputs $n,q,k$ where $k < n < q,$ $q$ a typical prime such that $q > n^{C \log \log n},$ runs in $\poly(n)$ time and returns $n$ integers $d_1 > \ldots > d_n > 1$ all co-prime to $q-1$  with $\LCM(d_1,\ldots,d_n) < q^\eps$.
\end{lem}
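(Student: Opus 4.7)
The plan is to take the $d_i$'s to be distinct squarefree products of a carefully chosen set of small primes, each of which avoids the prime factors of $q-1$. Let $t = \lceil \log_2(n+1) \rceil$ and let $p_1 < p_2 < \cdots < p_t$ be the $t$ smallest primes that do not divide $q-1$. By unique factorization, distinct subsets $S \subseteq [t]$ give distinct squarefree products $\prod_{i \in S} p_i$, so the $2^t - 1 \geq n$ nonempty subsets produce at least $n$ distinct integers greater than $1$, each automatically coprime to $q-1$. Sort them in decreasing order and take $d_1 > d_2 > \cdots > d_n$ to be the top $n$ of them.

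The key quantitative step is that $\LCM(d_1,\ldots,d_n)$ divides $p_1 p_2 \cdots p_t$, so it suffices to show this primorial is at most $q^\eps$. Since $q$ is typical, $\omega(q-1) = O(\log \log q)$, so $p_1,\ldots,p_t$ lie among the first $t + O(\log \log q)$ primes. The prime number theorem bounds the $j$-th prime by $O(j \log j)$, hence $p_t \leq B$ where $B = O\bigl((t+\log \log q)\log(t+\log \log q)\bigr)$. The hypothesis $q > n^{C \log \log n}$ forces $\log \log q = O(\log \log n)$, so with $t = O(\log n)$ we get $B = O(\log n \cdot \log \log n)$. Chebyshev's primorial bound $\prod_{p \leq B} p \leq e^{O(B)}$ then yields $\LCM(d_1,\ldots,d_n) \leq \exp(O(\log n \cdot \log \log n))$. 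Since $q^\eps \geq n^{\eps C \log \log n} = \exp(\Theta(\eps C \log n \cdot \log \log n))$, taking $C$ large enough in terms of $\eps$ makes the LCM bound strictly smaller than $q^\eps$.

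For the algorithmic side, we sieve all primes up to $B$ in $\mathrm{polylog}(n)$ time, test divisibility of $q-1$ by each via a single modular reduction, and keep the first $t$ that are not divisors of $q-1$. Enumerating the $2^t \leq 2(n+1)$ subset products, sorting them, and selecting the top $n$ takes $\poly(n)$ arithmetic operations on integers of $O(\log n \log \log n)$ bits each, so the whole procedure runs in $\poly(n)$ time.

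I do not expect any deep obstacle: the only substantive ingredients are Chebyshev's primorial estimate and the typicality hypothesis, which together ensure that ``small primes coprime to $q-1$'' are dense enough to support the counting argument. The $\log \log n$ factor in the exponent of the hypothesis on $q$ is exactly what is needed to absorb the corresponding $\log \log n$ factor coming from the primorial bound.
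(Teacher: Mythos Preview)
Your construction is essentially identical to the paper's: take $t=\lceil\log_2(n+1)\rceil$ smallest primes coprime to $q-1$, let the $d_i$ be $n$ distinct divisors of their product, and bound the $\LCM$ by that product. However, one step in your quantitative argument is false. You assert that the hypothesis $q>n^{C\log\log n}$ ``forces $\log\log q=O(\log\log n)$''. This hypothesis is a \emph{lower} bound on $q$ and places no upper bound whatsoever on $\log\log q$; $q$ may be astronomically larger than $n^{C\log\log n}$. Consequently your bound $B=O(\log n\cdot\log\log n)$ is unjustified.

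The fix is easy and is what the paper does implicitly: compare the primorial bound directly to $q^\eps$ rather than to a function of $n$ alone. From typicality and Chebyshev you get
\[
\LCM(d_1,\ldots,d_n)\;\leq\;\exp\Bigl(O\bigl((\log n+\log\log q)\,\log(\log n+\log\log q)\bigr)\Bigr),
\]
and you need this to be at most $q^\eps=\exp(\eps\log q)$. If $\log\log q\leq\log n$, the exponent is $O(\log n\log\log n)$, dominated by $\eps\log q\geq \eps C\log n\log\log n$ once $C$ is large enough. If $\log\log q>\log n$, the exponent is $O(\log\log q\cdot\log\log\log q)=o(\log q)$, so the inequality holds automatically. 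Either way the conclusion stands. (The same oversight affects your running-time analysis: sieving up to $B$ and reducing $q-1$ modulo each candidate prime involves $\log q$ in the bit complexity, so the honest bound is $\poly(n,\log q)$; the paper is equally informal on this point.)
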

\begin{proof}  
Let $D$ be the product of the first $\lceil \log_2(n+1) \rceil$ primes that are co-prime with $q-1.$ Let $d_1 > \ldots > d_n$ be $n$ distinct divisors of $D.$ If $q-1$ has at most $C'\log\log(q)$ prime factors, $D$ can be upper bounded by the product of the first $\log n + C'(\log \log q)$ primes . By the Prime Number Theorem, $$D < \left(n \left(\log q \right)^{C'} \right) ^{C'' \log \log \left( n \left(\log q \right)^{C'} \right)}$$ for some constant $C'' > 0.$ 
Now for any $\eps,C'', C'$, we can pick a sufficiently large $C$ such that, if $q > n^{C\log\log n}$ this  expression is at most $q^{\eps}$.
\end{proof}

\section{Proof of Theorem~\ref{THM-main}}\label{sec-final}

We now put all the ingredients together to prove Theorem~\ref{THM-main}.  Given $m = \lfloor \beta k \rfloor$ we let $\eps = 1/4 - \beta/2$ and, using Lemma~\ref{lem-d} find a sequence of integers $d_1,\ldots,d_n$ all coprime to $q-1$ so that their product is at most $q^\eps$. We let $A$ be an $m \times n$ Vandermonde matrix and define $E(x) = F_{d,A}(x)$. Using Theorem~\ref{thm-analysisFdA} we get that $E(X_V)$ is $q^{-(\eps/2)k}$-close to the uniform distribution on $\F_q^m$.


\bibliographystyle{alpha}

\bibliography{delignext}

\end{document}